\def\01{\{0,1\}}
\newcommand{\eps}{\varepsilon}
\newcommand{\ket}[1]{|#1\rangle}
\newcommand{\bra}[1]{\langle#1|}
\newcommand{\ketbra}[2]{|#1\rangle\langle#2|}
\newcommand{\inp}[2]{\langle{#1}|{#2}\rangle} % inproduct, < | >
\newcommand{\HM}{\mbox{\rm HM}}
\newcommand{\Maj}{\mbox{\rm Majority}}
\newcommand{\E}{\mathop{\mathbb E}}
\newcommand{\M}{{\cal M}}
\newtheorem{definition}{Definition}
\newtheorem{theorem}{Theorem}
\title{Better Non-Local Games from Hidden Matching}
\author{Harry Buhrman\thanks{CWI and University of Amsterdam, buhrman@cwi.nl. Supported by a Vici grant from NWO.} \and Giannicola Scarpa\thanks{CWI Amsterdam, g.scarpa@cwi.nl. Supported by a Vidi grant from NWO.} \and Ronald de Wolf\thanks{CWI Amsterdam, rdewolf@cwi.nl. Supported by a Vidi grant from NWO.}}
\begin{document}

\maketitle

\begin{abstract}
We construct a non-locality game that can be won with certainty by a quantum strategy using $\log n$ shared EPR-pairs,
while any classical strategy has winning probability at most 
$\frac{1}{2}+O\left(\frac{\log n}{\sqrt{n}}\right)$.
This improves upon a recent result of Junge et al.\ in a number of ways.
\end{abstract}

\section{Introduction}

One of the most striking features of quantum mechanics is the fact that
\emph{entangled} particles can exhibit correlations which cannot be reproduced or
explained by classical physics (i.e., by ``local hidden-variable theories'').
This was first observed by Bell~\cite{bell:epr}
in response to Einstein-Podolsky-Rosen's challenge to the completeness of quantum mechanics~\cite{epr}.
An appropriate experimental realization of such quantum correlations is the strongest proof
we have that nature does not behave in accordance with classical physics.
Many such experiments have in fact already been done starting with~\cite{AspectGR82:Bell}.
All behave in accordance with quantum predictions,
though so far none has closed all conceivable ``loopholes'' that would allow
some (usually very contrived) classical explanation of the observed correlations.

Roughly speaking, the more entanglement the quantum experiment starts with, the
further its exhibited correlations can deviate from what is achievable classically.
In this paper we study this relation quantitatively.
The setup is as follows~\cite{chtw:nonlocal}.
Two spacelike separated parties, called Alice and Bob, receive inputs $x$ and $y$ according
to some fixed and known probability distribution~$\pi$, and are required to produce outputs $a$ and $b$.
There is a predicate $V(ab|xy)$ specifying which output-pairs $a,b$ are considered ``winning'' 
on inputs $x,y$, while the others are ``losing''.
We will usually write $G=G(V,\pi)$ to denote such a game.

Quantum strategies for playing such a game start out with some fixed entangled state,
say with local dimension $n$; a typical example would be $\log n$ shared 
EPR-pairs $\frac{1}{\sqrt{2}}(\ket{00}+\ket{11})$.
For each input $x$, Alice has a set of measurement operators $\{A_x^a\}$
(subject to the usual constraint $\sum_a A_x^a=I$, the $n$-dimensional identity)
and for each $y$, Bob has measurement operators $\{B_y^b\}$ (subject to $\sum_b B_y^b=I$).
They apply the measurement corresponding to $x$ and $y$ to the entangled state $\ket{\psi}$,
producing outputs $a$ and $b$, respectively.
The probability to output $a,b$ is $\bra{\psi}A_x^a\otimes B_y^b\ket{\psi}$.
Note that no communication takes place between Alice and Bob.
Assuming the predicate $V(ab|xy)$ takes value $+1$ on winning outputs and value $-1$
on losing outputs, the \emph{advantage} (i.e., difference between winning and losing probabilities)
of this quantum strategy can be succinctly expressed as:
\[
\omega_q(G,\{A_x^a\},\{B_y^b\},\psi)= \E_{x,y,a,b} [V(ab|xy)] = \sum_{x,y,a,b}\pi(x,y)\bra{\psi}A_x^a\otimes B_y^b\ket{\psi}V(ab|xy).
\]

The quantum value of the game, when restricted to entangled states with local dimension $n$, is defined as
\[
\omega_{q,n}(G)=\max_{\{A_x^a\},\{B_y^a\},\psi\in\mathbb{C}^{n\times n}}\omega_q(G,\{A_x^a\},\{B_y^b\},\psi).
\]
This is a number between $-1$ and 1; a value of~1 indicates that the strategy wins with certainty, 
a value of~$-1$ that it always fails (for instance if $V(ab|xy)=-1$ for all $a,b,x,y$).

This quantum value should be contrasted with the best value that can be obtained by \emph{classical} strategies.
In a classical strategy the shared entangled state is replaced by a shared random variable~$R$, sometimes called the ``hidden variable''.
Its distribution is independent of the inputs and its value $r$ is seen by both Alice and Bob, who can use
this to coordinate their behaviour.
A classical strategy is described by two functions $A:x,r\mapsto a$ and $B:y,r\mapsto b$ which are used
by Alice and Bob, respectively, to determine their output as a function of their input and of the shared random variable.
The value of such a classical strategy for game $G$ is
\[
\omega_c(G,A,B,R)=\sum_{x,y,r}\Pr[R=r]\pi(x,y)V(A(x,r)B(y,r)|xy)
\]
and the classical value of the game is what can be achieved by the best classical strategy:
\[
\omega_c(G)=\max_{A,B,R} \omega_c(G,A,B,R).
\]
It turns out that this value can be achieved by \emph{deterministic} strategies, so we could drop $R$ from this definition.

It is well-known that there are games where $\omega_{q,n}(G)$ is substantially higher than $\omega_c(G)$;
the CHSH game~\cite{chsh} is a famous example of this with $n=2$ (Alice and Bob share one EPR-pair).
In fact, there are games $G$ where the ratio $\omega_{q,n}(G)/\omega_c(G)$ is unbounded when the local dimension $n$ of the entangled state grows.
We are interested in the maximal value this ratio can take as a function of $n$.
Our starting point is a recent paper by Junge et al.~\cite{PerezGarcia09} who studied the same question
using tools from Operator Space theory.
On the one hand, they proved that the ratio cannot be larger than $O(n)$; on the other hand they proved
the existence of a game where the ratio is $\Omega(\sqrt{n}/\log^2 n)$.
%[this $\Omega$-notation is the lower bound version of the familiar $O$-notation].

Our main result in this paper is a simple game where the local dimension is $n$ and the ratio between
the optimal quantum and classical values is $\Omega(\sqrt{n}/\log n)$:
there is a quantum strategy that achieves the maximal value~1 using $\log n$ EPR-pairs as its entangled state $\ket{\psi}$,
while no classical strategy can have an advantage better than $O(\log n/\sqrt{n})$.
We also give a classical strategy achieving advantage of $\Omega(1/\sqrt{n})$, so our bounds are nearly optimal.

Our game is a variant of the ``Hidden Matching'' problem. This was introduced
in the context of quantum communication complexity by Bar-Yossef et al.~\cite{bjk:q1wayj}, and other variants of it were
subsequently studied in~\cite{gkrw:identificationj,gkkrw:1wayj,gavinsky:interactionvsquantum,gavinsky:interactionvsnonlocality}.
A precise definition will be given below.
The main mathematical tool we use in our analysis is the so-called ``KKL inequality'' from Fourier analysis of Boolean functions
(see~\cite{odonnell:survey,wolf:fouriersurvey} for surveys of this area).
This inequality was used before to analyze another variant of Hidden Matching 
in~\cite{gkkrw:1wayj}, though their analysis is different and more complicated.

Our result has several advantages over the one of Junge et al.~\cite{PerezGarcia09}.
First, our game is simple and explicit, while they only give a non-explicit existence proof
(via a probabilistic argument based on Gaussian matrices).
Clearly, explicitness is necessary for experimental realization.
Second, our quantum-classical separation is slightly stronger, $\Omega(\sqrt{n}/\log n)$ instead of $\Omega(\sqrt{n}/\log^2 n)$.
The stronger the separation, the more resistant it is to noise (for instance, if noise can change numerator and denominator of
a large ratio by some small $\eps$, the ratio won't be significantly affected).
Third, the number of inputs for Alice and Bob is smaller: in their game there are roughly $2^{n\log^2 n}$ possible inputs,
while in our case there are $2^n$ possible inputs $x$ for Alice and roughly $2^{n\log n}$ possible inputs $y$ for Bob
(in fact, the later could easily be reduced to much less than $2^n$). The fewer possible inputs and measurement settings
there are, the easier it should be to experimentally realize a quantum strategy for the game.

The organization of the paper is as follows. While our focus is non-locality, it will actually
be useful to first study the original version of the Hidden Matching problem in the context of protocols
where communication from Alice to Bob is allowed.
In Section~\ref{seccommunication} we prove a tight bound of $1/2+\Theta(c/\sqrt{n})$
on the maximal success probability Alice and Bob can obtain with $c$ bits of classical communication.%
\footnote{This bound includes the main lower bound of~\cite{bjk:q1wayj} as a special case:
namely, the communication $c$ needs to be $\Omega(\sqrt{n})$ bits if we want to have success probability at least 2/3.
The proof of~\cite{bjk:q1wayj} is based on information theory rather than Fourier analysis.}
Section~\ref{secnonlocal} then ports those results from the communication setting to the non-locality setting, establishing the results mentioned above.

%{\bf TODO}:say something about Gavinsky's non-locality game~\cite{gavinsky:interactionvsnonlocality}; about Grothendieck and Tsirelson.

\section{The Hidden Matching problem}\label{seccommunication}

\subsection{Problem definition and quantum protocol}

In this section we describe the original Hidden Matching communication problem 
and an efficient quantum protocol for it, both from~\cite{bjk:q1wayj}.

\begin{definition}[Hidden Matching ($\HM$)]
Let $n$ be a power of 2 and $\M_n$ the set of all perfect matchings on the set $[n]=\{1,\ldots,n\}$ (a perfect matching is a partition of $[n]$ into $n/2$ disjoint pairs $(i,j)$, also called ``edges''). Alice is given $x \in\01^n$ and Bob is given $M\in \M_n$, distributed according to the uniform distribution $\mathcal{U}$. We allow 1-way communication from Alice to Bob, and Bob outputs an edge $(i,j) \in M$ and $v\in \01$. They win if $v = x_i \oplus x_j$.
\end{definition}

\begin{theorem}\label{thm:quantumHM}
 There exists a quantum protocol for $\HM$ with $\log n$ qubits of 1-way communication, such that always $v = x_i \oplus x_j$.
\end{theorem}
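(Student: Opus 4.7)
The plan is to exhibit the explicit $\log n$-qubit one-way protocol from~\cite{bjk:q1wayj} and to verify that (i) for every matching $M$, Bob's measurement is well defined, and (ii) the outcome always encodes $x_i \oplus x_j$ for the edge $(i,j)$ he reports.

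First, I would have Alice prepare, on $\log n$ qubits, the phase-encoded state
\[
\ket{\phi_x} = \frac{1}{\sqrt{n}} \sum_{i=1}^n (-1)^{x_i} \ket{i},
\]
and send it to Bob. Next, I would describe Bob's measurement: for his matching $M$, he measures in the orthonormal basis
\[
\Bigl\{\tfrac{1}{\sqrt{2}}(\ket{i}+\ket{j}),\ \tfrac{1}{\sqrt{2}}(\ket{i}-\ket{j}) \,:\, (i,j)\in M\Bigr\}.
\]
Since $M$ is a perfect matching of $[n]$, these $n$ vectors partition the indices into $n/2$ orthogonal two-dimensional subspaces and thus form a valid measurement basis on $\mathbb{C}^n$. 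When Bob obtains the outcome corresponding to one of these vectors, he outputs the associated edge $(i,j) \in M$ together with the bit $v=0$ if the sign was $+$ and $v=1$ if the sign was $-$.

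Then I would compute, for any edge $(i,j)\in M$, the amplitude of $\tfrac{1}{\sqrt{2}}(\ket{i}+(-1)^{v}\ket{j})$ in $\ket{\phi_x}$:
\[
\frac{1}{\sqrt{2n}}\bigl((-1)^{x_i}+(-1)^{v+x_j}\bigr),
\]
whose squared modulus is $2/n$ when $v=x_i\oplus x_j$ and $0$ otherwise. Summing over the $n/2$ edges of $M$ gives total probability $1$, so Bob always obtains some edge $(i,j)$ with $v = x_i\oplus x_j$, proving the theorem.

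There is no real obstacle here: the only thing to double-check is the orthogonality claim (that all $n$ basis vectors indexed by $M$ are mutually orthogonal), which is immediate because distinct edges involve disjoint index pairs. The substantive work in the paper begins afterwards, when we must show that without the shared qubit — or, later, when we must simulate this protocol by non-local correlations on $\log n$ EPR pairs — any classical strategy falls short by roughly $\sqrt{n}/\log n$.
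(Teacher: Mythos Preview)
Your proposal is correct and is essentially the same protocol and verification as in the paper: Alice sends the phase state $\ket{\phi_x}$, Bob measures in the matching basis $\{\tfrac{1}{\sqrt{2}}(\ket{i}\pm\ket{j})\}_{(i,j)\in M}$, and the amplitude computation shows only the outcome with $v=x_i\oplus x_j$ has nonzero probability. If anything, you add a little more detail than the paper (explicitly checking orthogonality of the basis and that the probabilities sum to~1).
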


\begin{proof}
The protocol is the following:
\begin{enumerate}
\item Alice sends Bob the state \( \ket{\psi} = \frac{1}{\sqrt{n}} \sum_{i=1}^n (-1)^{x_i} \ket{i} \).
\item Bob measures $|\psi\rangle$ in the basis $B = \{ \frac{1}{\sqrt{2}} (\ket{i} \pm \ket{j}) \mid (i,j) \in M\}$.
 If the outcome of the measurement is a state $ \frac{1}{\sqrt{2}}(\ket{i} + \ket{j})$ then Bob outputs $(i,j)$ and $v=0$. If the outcome of the measurement is a state $ \frac{1}{\sqrt{2}}(\ket{i} - \ket{j})$, Bob outputs $(i,j)$ and $v=1$. 
\end{enumerate}
For each $(i,j) \in M $ the probability to get outcome $\frac{1}{\sqrt{2}}(\ket{i} + \ket{j})$ is
\[
| \bra{\psi}\frac{1}{\sqrt{2}}(\ket{i} + \ket{j})\rangle |^2 = \left\{\begin{array}{ll}2/n & \mbox{ if }x_i\oplus x_j = 0\\
0 & \mbox{ if }x_i\oplus x_j = 1\end{array}\right.
\]
A similar argument holds for $\frac{1}{\sqrt{2}}(\ket{i} - \ket{j})$. Hence Bob's output is always correct.
\end{proof}

\subsection{Bound on classical communication protocols for $\HM$}

Here we show that classical protocols for Hidden Matching with little communication cannot have a good success probability.

\begin{theorem}\label{thm:upperHM}
Every classical deterministic protocol for $\HM$ with \mbox{$c$ bits} of 1-way communication from Alice to Bob where Bob outputs $(i,j),v$, has
\[
\Pr_{\mathcal{U}} [ v = x_i \oplus x_j ] \leq \frac{1}{2} + O\left(\frac{c}{\sqrt{n}}\right) .
\]
\end{theorem}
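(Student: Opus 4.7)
The plan is to convert the problem to Fourier analysis on $\01^n$ (with characters $\chi_T(x)=\prod_{i\in T}(-1)^{x_i}$ and coefficients $\hat f(T)=\E_x[f(x)\chi_T(x)]$). A deterministic $c$-bit one-way protocol is specified by an encoding $A:\01^n\to\01^c$, inducing a partition $\{S_m\}_{m\in\01^c}$ of $\01^n$ with densities $p_m=|S_m|/2^n$, together with Bob's rule $(m,M)\mapsto(e(m,M),v(m,M))$ that picks an edge $e(m,M)\in M$ and a bit $v(m,M)$. Let $f_m$ be the $\01$-indicator of $S_m$. Using the identity $\mathbf{1}[v=x_i\oplus x_j]=\tfrac{1}{2}(1+(-1)^v\chi_{\{i,j\}}(x))$, a short computation yields
\[
\Pr[v=x_i\oplus x_j]=\frac{1}{2}+\frac{1}{2}\sum_m\E_M\bigl[(-1)^{v(m,M)}\hat f_m(\{i(m,M),j(m,M)\})\bigr]\leq\frac{1}{2}+\frac{1}{2}\sum_m\E_M\Bigl[\max_{(i,j)\in M}|\hat f_m(\{i,j\})|\Bigr],
\]
so it suffices to upper-bound the resulting sum by $O(c/\sqrt n)$.

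For the inner expectation, I would use $\max\leq\sqrt{\sum(\cdot)^2}$ followed by Jensen's inequality to pull $\E_M$ inside the square root:
\[
\E_M\Bigl[\max_{(i,j)\in M}|\hat f_m(\{i,j\})|\Bigr]\leq\sqrt{\E_M\Bigl[\sum_{(i,j)\in M}\hat f_m(\{i,j\})^2\Bigr]}=\sqrt{\frac{W^2[f_m]}{n-1}},
\]
where $W^2[f_m]:=\sum_{|T|=2}\hat f_m(T)^2$ is the level-$2$ Fourier weight of $f_m$, and I use the elementary fact that a uniformly random perfect matching on $[n]$ contains each fixed pair with probability $1/(n-1)$.

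The final ingredient is a KKL-type level-$2$ inequality: for a Boolean function $f$ of density $p$, hypercontractivity yields $W^2[f]=O(p^2\log^2(1/p))$. Plugging this in gives $\sqrt{W^2[f_m]/(n-1)}=O(p_m\log(1/p_m)/\sqrt n)$, and summing over $m$ together with the entropy bound $\sum_m p_m\log(1/p_m)=H(p)\leq\log(2^c)=c$ (since $(p_m)$ is supported on at most $2^c$ messages) yields the advertised $O(c/\sqrt n)$ bound.

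The main obstacle is arranging the inequalities so that the KKL bound is actually exploited. A naive $\max\leq\sum_{(i,j)\in M}$ loses a factor of $n$ and yields only $\Pr[\mathrm{win}]\leq\tfrac{1}{2}+O(c)$; and even with the correct $\max\leq\sqrt{\sum}$, using only Parseval's bound $W^2[f_m]\leq p_m$ gives $\tfrac{1}{\sqrt n}\sum_m\sqrt{p_m}\leq 2^{c/2}/\sqrt n$ by Cauchy--Schwarz on the outer sum---still too weak. It is precisely the logarithmic improvement provided by the KKL (level-$2$) inequality that converts the bad $\sum_m\sqrt{p_m}$ into the entropy $\sum_m p_m\log(1/p_m)\leq c$.
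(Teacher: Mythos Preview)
Your proof is correct and follows essentially the same route as the paper: partition by message, apply the KKL/level-2 inequality $W^2[f_m]=O(p_m^2\log^2(1/p_m))$, use that a random perfect matching hits each pair with probability $1/(n-1)$, and finish with the entropy bound $\sum_m p_m\log(1/p_m)\le c$. The only cosmetic difference is how the $1/\sqrt{n-1}$ factor is extracted: the paper writes $\E_M[|\beta^m_{ij}|]=\sum_{i,j}q_m(i,j)|\beta^m_{ij}|$ and applies Cauchy--Schwarz together with $\sum q_m(i,j)^2\le 1/(n-1)$, whereas you use $\max\le\sqrt{\sum}$ over the edges of $M$ followed by Jensen; both arrive at exactly the same bound $\sqrt{\tfrac{1}{n-1}\sum_{i,j}(\beta^m_{ij})^2}$ (equivalently $\sqrt{W^2[f_m]/(n-1)}$, since $\hat f_m(\{i,j\})=p_m\beta^m_{ij}$).
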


The intuition behind the proof is the following.
If the communication $c$ is small, the set $X_m$ of inputs $x$ for which Alice sends message $m$, will typically be large, meaning Bob has little knowledge of most of the bits of $x$.  
By the KKL inequality, this implies that for most of the ${n\choose 2}$ $(i,j)$-pairs, Bob cannot guess the parity $x_i\oplus x_j$ well.  Of course, Bob has some freedom in which $(i,j)$ he outputs, but that freedom is limited to the $n/2$ $(i,j)$-pairs in his matching $M$, and it turns out that on average he won't be able to guess any of those parities well.

\begin{proof}
Fix a classical deterministic protocol. 
%We assume w.l.o.g.\ that Bob always outputs an edge $(i,j)$ from $M$.
For each $m\in\01^c$, let $X_m \subseteq \01^n$ be the set of Alice's inputs for which she sends message $m$.
These sets $X_m$ together partition Alice's input space $\01^n$.
Define $p_m = \frac{|X_m|}{2^n}$.  
Note that $\sum_m p_m = 1$, so $p$ is a probability distribution over the $2^c$ messages $m$.
Define $\eps$ such that $ \Pr_{\mathcal{U}}[v = x_i \oplus x_j] = \frac{1}{2} + \eps $, and 
$\eps_m$ such that $\Pr_{\mathcal{U}}[v = x_i \oplus x_j\mid \mbox{Bob received }m] = \frac{1}{2} + \eps_m $. 
Then $\eps = \sum_m p_m\eps_m$.  

For each $m$ and $(i,j)$ define the probability distribution 
$q_m(i,j) = \Pr_{M \in \M_n}[\mbox{Bob outputs }(i,j) \mid \mbox{Bob received }m]$. 
We have $q_m(i,j) \leq \frac{1}{n-1}$,
because we assume Bob always outputs an edge in $M$ and for fixed $i\neq j$ we have $\Pr_M[(i,j)\in M]=1/(n-1)$
(each $j$ is equally likely to be paired up with $i$).
Also define $\beta^m_{ij} = \E_{x\in X_m}[(-1)^{x_i}\cdot(-1)^{x_j}]$. 
The best Bob can do when guessing $x_i\oplus x_j$ given message $m$, 
is to output the value of $x_i\oplus x_j$ that occurs most often among the $x\in X_m$.
The fraction of $x\in X_m$ where $x_i\oplus x_j=0$ is $1/2 + \beta^m_{ij}/2$, 
hence Bob's optimal success probability when guessing $x_i\oplus x_j$ is $1/2 + |\beta^m_{ij}|/2$. 
This implies, for fixed $m$,
\[
\E_{(i,j)\sim q_m} \left[\frac{1}{2} + \frac{|\beta^m_{ij}|}{2}\right] \geq \Pr_{\stackrel{x \in X_m}{M \in \M_n}} [v = x_i \oplus x_j] = \frac{1}{2} + \eps_m.
\]
%where the notation $x\sim Q$ stands for ``$x$ chosen according to probability distribution $Q$''.

As explained in~\cite[section 4.1]{wolf:fouriersurvey}, it follows from the KKL inequality~\cite{kkl:influence} that
\begin{equation}\label{eq:KKL}
\sum_{i,j:i\neq j} (\beta^m_{ij})^2 \leq O\left(\log \frac{1}{p_m}\right)^2.                                                  
\end{equation}
This allows us to upper bound $\eps_m$:
\[
 2\eps_m \leq \E_{(i,j)\sim q_m}[|\beta^m_{ij}|] = \sum_{i,j} q_m(i,j) |\beta^m_{ij}| 
\stackrel{(*)}{\leq} \sqrt{\sum_{i,j}q_m(i,j)^2} \cdot \sqrt{\sum_{i,j}|\beta^m_{ij}|^2} 
\stackrel{(**)}{\leq} \frac{1}{\sqrt{n-1}} \cdot O\left(\log\frac{1}{p_m}\right),
\]
where $(*)$ is Cauchy-Schwarz and $(**)$ follows from 
\(\sum_{i,j} q_m(i,j)^2 \leq \max_{i,j} q_m(i,j) \cdot \sum_{i,j}q_m(i,j) \leq \frac{1}{n-1}\) 
and Eq.~\eqref{eq:KKL}. Now we can bound $\eps$:
\[
 \eps = \sum_m p_m\eps_m \leq \sum_m p_m\frac{O(\log(1/p_m))}{\sqrt{n-1}} = \frac{1}{\sqrt{n-1}} \sum_m p_m  O(\log(1/p_m)) = 
\frac{1}{\sqrt{n-1}} O(H(p)) = O\left(\frac{c}{\sqrt{n}}\right)
\]
where $H$ denotes the entropy function, and $H(p)\leq c$ since the distribution $p$ is on $2^c$ elements.
\end{proof}

\subsection{Classical communication protocol for $\HM$}

Here we design a classical communication protocol that achieves the above upper bound on the success probability.

\begin{theorem}\label{thm:lowerHM}
For every positive integer $c\leq \sqrt{n}$, there exists a classical protocol for $\HM$ with $c$ bits of 1-way communication from Alice to Bob, such that
\[
\Pr_{\mathcal{U}} [ v = x_i \oplus x_j ] = \frac{1}{2} + \Omega\left(\frac{c}{\sqrt{n}}\right) .
\]
\end{theorem}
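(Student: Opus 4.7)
The plan is to exhibit an explicit classical protocol and analyze its winning probability. A natural first candidate is the ``birthday'' protocol: Alice sends her first $c$ bits $m=(x_{1},\ldots,x_{c})$ to Bob; on receiving $m$, Bob scans his matching $M$ and, if there exists an edge $(i,j)\in M$ with both endpoints in $\{1,\ldots,c\}$, outputs $(i,j)$ together with $v=x_{i}\oplus x_{j}=m_{i}\oplus m_{j}$ (which he can compute exactly from $m$); otherwise he outputs an arbitrary edge of $M$ together with a uniformly random bit.

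For the analysis I would use that under the uniform distribution $\mathcal{U}$ on perfect matchings, each pair $\{i,j\}\subseteq[n]$ is an edge of $M$ with probability exactly $1/(n-1)$. Hence the number $X$ of edges of $M$ inside $\{1,\ldots,c\}$ has mean $\binom{c}{2}/(n-1)=\Theta(c^{2}/n)$, and a short Paley--Zygmund-style second-moment argument (the only correlations between disjoint pairs of a uniformly random matching are mildly negative, and are straightforward to control) gives $\Pr_{M}[X\geq 1]=\Omega(c^{2}/n)$ whenever $c\leq\sqrt{n}$. In the event $X\geq 1$ Bob wins with probability $1$; otherwise he wins with probability exactly $\tfrac{1}{2}$. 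Combining, $\Pr_{\mathcal{U}}[v=x_{i}\oplus x_{j}]\geq\tfrac{1}{2}+\Omega(c^{2}/n)$.

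At the endpoint $c=\Theta(\sqrt{n})$ this already gives $\tfrac{1}{2}+\Omega(c/\sqrt{n})$, matching the upper bound of Theorem~\ref{thm:upperHM} up to constants. For smaller $c$ the bound $c^{2}/n$ is loose by a factor $\sqrt{n}/c$, and the main obstacle is to amplify the advantage by this factor so as to obtain $\Omega(c/\sqrt{n})$ uniformly. The natural strategy is to augment the birthday protocol by an additional ``majority of XORs'' hint: Alice fixes a reference matching $N$ on $[n]$ and uses (or reserves one of her $c$ bits for) $\mathrm{Maj}\bigl((x_{i}\oplus x_{j})_{(i,j)\in N}\bigr)$, which, by the Fourier spectrum of $\mathrm{Maj}$, creates an $\Omega(1/\sqrt{n})$ bias on $x_{i}\oplus x_{j}$ for every $N$-edge $(i,j)$, and edges of $M\cap N$ exist with constant probability. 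Combining these two contributions so that they align cleanly---producing, for each message $m\in\{0,1\}^{c}$, a conditional distribution on $x$ whose level-$2$ Fourier mass is spread over $\Theta(n)$ pairs of magnitude $\Theta(c/\sqrt{n})$ each---would yield the desired $\Omega(c/\sqrt{n})$ advantage, and corresponds exactly to the equality case of the Cauchy--Schwarz step in the proof of Theorem~\ref{thm:upperHM}.
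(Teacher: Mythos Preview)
Your proposal identifies the right two ingredients (a birthday collision and a majority bit) but does not actually combine them into a protocol that achieves $\Omega(c/\sqrt{n})$ for general $c\leq\sqrt{n}$. The birthday protocol you analyze rigorously gives only $\Omega(c^2/n)$; the single majority-of-XORs bit you mention gives only $\Omega(1/\sqrt{n})$; and for intermediate $c$ (say $c=n^{1/4}$) both of these are $\Theta(n^{-1/2})$, short of the target $c/\sqrt{n}=n^{-1/4}$. The final paragraph describes what the \emph{outcome} of a correct construction would look like in Fourier terms (level-2 mass of order $c/\sqrt{n}$ on $\Theta(n)$ pairs, matching the Cauchy--Schwarz equality case of the upper bound), but this is a specification, not a construction, and you do not say how to spend the $c$ bits to realize it.

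The paper's protocol supplies exactly the missing idea: rather than spending the $c$ bits to pin down $c$ coordinates exactly, spend each bit to describe $\sqrt{n}/c$ coordinates \emph{approximately}, via majority. Concretely, Alice partitions the first $\sqrt{n}$ coordinates into $c$ blocks of size $k=\sqrt{n}/c$ and sends the $c$ block-majorities. With constant probability (this is the birthday part, now applied to a window of size $\sqrt{n}$ rather than $c$) some edge $(i,j)\in M$ has its two endpoints in distinct blocks; for such an edge each majority bit predicts the corresponding coordinate with bias $\Omega(1/\sqrt{k})$, and by independence across blocks the XOR of the two majorities predicts $x_i\oplus x_j$ with bias $\Omega(1/k)=\Omega(c/\sqrt{n})$. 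The key calibration you are missing is the block size $\sqrt{n}/c$: it simultaneously makes the window large enough for a collision to occur with constant probability and small enough for each majority bit to carry bias $\Omega(\sqrt{c}/n^{1/4})$.
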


\begin{proof}
We will first handle the case $c\geq2$, dealing with $c=1$ at the end of the proof. 
Assume for simplicity $\sqrt{n}$ is divisible by $c$. The protocol is as follows:
\begin{enumerate}
\item Alice considers the first $\sqrt{n}$ bits of $x$, and divides them into $c$ consecutive blocks $B_1\ldots B_c$ of $k = \sqrt{n} / c$ bits each. She sends Bob a $c$-bit string $m = m_1\ldots m_c$ where $m_i = \Maj(B_i)$ (the value that occurs more often in $B_i$). Let $B(i)$ denote the index of the block containing~$x_i$. With slight abuse of notation, we will also use $B(i)$ for the corresponding block itself (i.e., a $k$-bit string), and for the set of~$k$ indices of the bits in this block.
\item Let $E$ denote the event that there is an edge $(i,j) \in M$ satisfying $i,j\in [\sqrt{n}], B(i)\neq B(j)$. If $E$ occurs then Bob chooses uniformly at random one of the edges $(i,j)$ satisfying the above condition, and outputs $(i,j)$ and $m_{B(i)}\oplus m_{B(j)}$. If $E$ does not occur then he outputs 
a random $(i,j)\in M$ and a random bit (in which case they win with probability~$1/2$).
\end{enumerate}  
We first show that $\Pr_M[E]\geq 1/10$. 
The probability (over a uniformly random matching $M$) that none of the $i\in[\sqrt{n}]$ is paired up
with a $j\in[\sqrt{n}]$, is at most
\[
\prod_{i=1}^{\sqrt{n}}\frac{n-\sqrt{n}-i+1}{n-i+1}\leq\left(1-\frac{1}{\sqrt{n}}\right)^{\sqrt{n}}\leq 1/e.
\]
On the other hand, if some $i\in[\sqrt{n}]$ is paired up with another $j\in[\sqrt{n}]$,
then since this $j$ will be uniformly distributed over $[\sqrt{n}]\backslash\{i\}$, the probability that
$j$ lands in the same $k$-bit block as $i$ (i.e., that $B(i)=B(j)$) 
is at most $(k-1)/(\sqrt{n}-1)\leq 1/c\leq 1/2$.
Hence $\Pr_M[\neg E]\leq 1/e+1/2<9/10$.

Next we show
\(\Pr_\mathcal{U} [ m_{B(i)}\oplus m_{B(j)} = x_i \oplus x_j \mid E] = \frac{1}{2} + \Omega\left(\frac{c}{\sqrt{n}}\right).\)
Below we condition on $E$ without mentioning this further.
It will be convenient to use $\pm 1$-valued bits instead of $0/1$-valued bits,
because then the parity of two bits corresponds to their product.
Let $X \in \{\pm 1\}^n$, uniformly distributed, be the random variable for Alice's input. 
Let $I,J$ be the random variables for Bob's output edge,  
then $I,J$ are uniformly distributed over distinct blocks $B(I),B(J)$ respectively.
Let $M_{B(I)}\in\{\pm 1\}$ be the majority value of the block $B(I)\in\{\pm 1\}^k$, 
and similarly for $M_{B(J)}$. 
Note that $M_{B(I)}$ has the same sign as the sum of the entries of the block $B(I)$.
 
Since $I$ is uniformly distributed over $B(I)$, the bit $M_{B(I)}$ (which Bob knows) 
has some positive correlation with the bit $X_I$ (which Bob would like to know):
\[
\E_{X,I} [M_{B(I)}X_I] = \E_{X,I} \left[\frac{1}{k} \sum_{\ell \in B(I)} M_{B(I)}X_\ell \right] =
\E_{X,I} \left[\frac{1}{k} \left|\sum_{\ell \in B(I)} X_\ell \right|\right] =
\Omega\left(\frac{1}{\sqrt{k}}\right),
\]
where the last step follows from the binomial distribution: 
the sum of $k$ uniform random coin flips is at least $\sqrt{k}$
away from its expectation with at least constant probability (roughly 5\%).
The same lower bound holds for $\E_{X,J}[M_{B(J)}X_J]$.

Bob uses $M_{B(I)}M_{B(J)}$ to guess the parity $X_IX_J$. 
Because $I$ and $J$ are each uniformly distributed over distinct blocks,
the random variables $M_{B(I)}X_I$ and $M_{B(J)}X_J$ are independent. 
Hence we have the following positive correlation in the guess of the parity:
\[
\E_{X,I,J} [(M_{B(I)}M_{B(J)})(X_IX_J)] = \E_{X,I}[M_{B(I)}X_I]\cdot \E_{X,J}[M_{B(J)}X_J]=  \Omega\left(\frac{1}{\sqrt{k}}\right)\cdot\Omega\left(\frac{1}{\sqrt{k}}\right) = \Omega\left(\frac{1}{k}\right).
\]
This says that, conditioned on the event $E$, 
the players win with probability $\frac{1}{2}+\Omega\left(\frac{c}{\sqrt{n}}\right)$.
In case $E$ does not hold, the winning probability is exactly $1/2$.
Since $\Pr_M[E]\geq 1/10$, the claimed lower bound on the overall winning probability follows.

To handle the case $c=1$, note that if $c=2$ it suffices if Alice sends $m_1\oplus m_2$ instead of $m_1,m_2$. 
This gives a 1-bit protocol with winning probability $1/2+\Omega(1/\sqrt{n})$.
\end{proof}

\section{The Non-Local Hidden Matching problem}\label{secnonlocal}

\subsection{Problem definition and quantum protocol}

We now port the problem and results from Section~\ref{seccommunication} to the non-local setting.

\begin{definition}[Non-Local Hidden Matching ($\HM_{nl}$)]
Let $n$ be a power of 2 and $\M_n$ the set of all perfect matchings on the set $[n]$. Alice is given $x \in \01^n$ and Bob is given $M\in \M_n$, distributed according to the uniform distribution $\mathcal{U}$. Alice's output is a string $a \in \01^{\log n} $ and Bob's output is an edge $(i,j) \in M$ and $b \in \01^{\log n}$. They win the game if the following condition is true 
\begin{equation}\label{eq:HMnlcondition}  
(a \oplus b) (i \oplus j) = x_i \oplus x_j.
\end{equation}
\end{definition}                                                                                 

Next we show $\omega_{q,n}(\HM_{nl}) = 1$, while in the next section we show $\omega_c(\HM_{nl}) = O(\log n / \sqrt{n})$. 
Together this gives our main result: $\omega_{q,n}(\HM_{nl})/\omega_c(\HM_{nl}) = \Omega(\sqrt{n}/\log n)$.

\begin{theorem} \label{thm:quantumHMnl}
There exists a quantum protocol for $\HM_{nl}$ with a shared entangled state of $\log n$ EPR-pairs, such that condition~\eqref{eq:HMnlcondition} is always satisfied.
\end{theorem}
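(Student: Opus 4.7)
The natural plan is to bootstrap the quantum communication protocol of Theorem~\ref{thm:quantumHM} using a shared maximally entangled state. I would have Alice and Bob share $\log n$ EPR-pairs, written (after identifying $[n]$ with $\01^{\log n}$) as
\[
\ket{\phi} = \frac{1}{\sqrt{n}}\sum_{i=1}^n \ket{i}\ket{i}.
\]
On input $x \in \01^n$, Alice applies the phase $\ket{i}\mapsto(-1)^{x_i}\ket{i}$ to her register, then applies $H^{\otimes \log n}$, and finally measures in the computational basis to obtain an outcome $a\in\01^{\log n}$, which will be her output.

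Using $H^{\otimes \log n}\ket{i}=\frac{1}{\sqrt{n}}\sum_a (-1)^{a\cdot i}\ket{a}$, a short calculation shows that $a$ is uniformly distributed and that, conditional on Alice's outcome being $a$, Bob's register collapses to
\[
\frac{1}{\sqrt{n}}\sum_{i=1}^n (-1)^{x_i \oplus (a\cdot i)}\ket{i}.
\]
This is precisely the single-register state that Alice sends in the protocol of Theorem~\ref{thm:quantumHM}, but with the ``effective input'' $y_i := x_i \oplus (a\cdot i)$ in place of $x_i$. I would therefore have Bob apply exactly the matching-basis measurement from that protocol: using his matching $M$, he measures in the basis $\{\frac{1}{\sqrt{2}}(\ket{i}\pm\ket{j}) : (i,j)\in M\}$, obtaining an edge $(i,j)\in M$ together with a bit $c$ that, by Theorem~\ref{thm:quantumHM} applied to $y$, equals $y_i\oplus y_j = x_i\oplus x_j \oplus a\cdot(i\oplus j)$ with certainty.

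It remains to have Bob produce an output $b\in\01^{\log n}$ so that the winning condition~\eqref{eq:HMnlcondition} holds. Rearranging, what is needed is
\[
b\cdot(i\oplus j) \;=\; x_i\oplus x_j \oplus a\cdot(i\oplus j) \;=\; c,
\]
a single linear constraint in $\mathbb{F}_2$. Because $i\neq j$ implies $i\oplus j\neq 0^{\log n}$, this constraint is satisfiable: Bob picks any coordinate $k$ where $(i\oplus j)_k=1$, sets $b_k=c$, and puts all other coordinates of $b$ to $0$. Bob then outputs this $b$ together with $(i,j)$, and condition~\eqref{eq:HMnlcondition} is satisfied with certainty.

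There is no real obstacle here; the only point to verify carefully is the partial-measurement calculation showing that Alice's Hadamard-basis outcome $a$ leaves Bob with the claimed state for every $a$. This is immediate from the identity $(U\otimes I)\ket{\phi}=(I\otimes U^\top)\ket{\phi}$ for the maximally entangled state, applied to Alice's phase-then-Hadamard operation, so the reduction to Theorem~\ref{thm:quantumHM} goes through cleanly.
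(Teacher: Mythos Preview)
Your proof is correct and is essentially the paper's protocol. The paper has Alice apply the phase $(-1)^{x_i}$, Bob project onto an edge $(i,j)\in M$, and then \emph{both} parties apply $H^{\otimes\log n}$ and measure in the computational basis, so that Bob's measured $b$ automatically satisfies~\eqref{eq:HMnlcondition}; since Alice's and Bob's operations commute this is the same as yours, the only cosmetic difference being that you replace Bob's Hadamard-then-measure step by the matching-basis measurement of Theorem~\ref{thm:quantumHM} and then choose $b$ classically to solve the single $\mathbb{F}_2$-equation $b\cdot(i\oplus j)=c$.
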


\begin{proof}
The protocol is as follows. Alice and Bob share a state 
\( \ket{\psi}= \frac{1}{\sqrt{n}} \sum_{i\in \01^{\log n}}\ket{i}\ket{i}.\)
\begin{enumerate}
 \item Alice performs a phase-flip to her part of $\ket{\psi}$ according to her input $x$. The state becomes 
  \(\ket{\psi'} = \frac{1}{\sqrt{n}} \sum_{i\in \01^{\log n}} (-1)^{x_i}\ket{i} \ket{i}\).
 \item Bob performs a projective measurement with projectors $P_{ij} = \ketbra{i}{i} + \ketbra{j}{j}$, with $(i,j) \in M$. The state collapses to 
 \(\ket{\psi''} = \frac{1}{\sqrt{2}} [(-1)^{x_i}\ket{i} \ket{i} + (-1)^{x_j}\ket{j} \ket{j}]\) 
for some $(i,j) \in M $.
\item Both players apply Hadamard transforms $H^{\otimes \log{n}}$, so they get
\[
H^{\otimes 2 \log{n}}\ket{\psi''} = 
\frac{1}{\sqrt{2}n} \sum_{a,b \in \01^{\log n}} \left( (-1)^{x_i+(a\cdot i)+(b\cdot i)} + (-1)^{x_j+(a\cdot j)+(b\cdot j)}\right)\ket{a}\ket{b},
\]
\end{enumerate}
where $x\cdot y$ is the bitwise inner product of $x$ and $y$ modulo $2$.
In this last state, only $a,b$ satisfying condition \eqref{eq:HMnlcondition} have nonzero amplitude, hence Alice and Bob win the game with certainty. 
\end{proof}

\subsection{Bound on classical protocols for $\HM_{nl}$}

\begin{theorem} \label{thm:upperHMnl}
 Every classical protocol for $\HM_{nl}$  has
\[
 \Pr_{\mathcal{U}} [ (a \oplus b) (i \oplus j) = x_i \oplus x_j ] \leq \frac{1}{2} + O\left(\frac{\log{n}}{\sqrt{n}}\right).
\]
\end{theorem}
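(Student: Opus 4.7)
The plan is to reduce this non-locality bound to the communication bound of Theorem~\ref{thm:upperHM}. The crucial observation is that in any classical strategy for $\HM_{nl}$, Alice's output $a$ lives in $\{0,1\}^{\log n}$, so it is extremely short; if we simply let her send $a$ to Bob as a message, we obtain a $1$-way communication protocol with only $\log n$ bits. The winning condition $(a\oplus b)\cdot(i\oplus j) = x_i\oplus x_j$ is designed precisely so that, once Bob has received $a$, he can compute a guess for $x_i\oplus x_j$ directly from his own local data.

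Concretely, I would proceed as follows. Fix a classical non-local strategy for $\HM_{nl}$, described by functions $A:(x,r)\mapsto a$ and $B:(M,r)\mapsto((i,j),b)$ and a shared random variable $R$. From it, construct a $1$-way protocol for $\HM$: on input $x$ Alice sends the $\log n$-bit message $m := A(x,r)$; on input $M$ Bob computes $((i,j),b) := B(M,r)$ from his own randomness and then outputs the edge $(i,j)$ together with the bit $v := (m\oplus b)\cdot(i\oplus j)$. By construction, Bob succeeds in $\HM$ (i.e. $v=x_i\oplus x_j$) exactly when the original non-local players succeed in $\HM_{nl}$, so the two success probabilities coincide.

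It remains to apply Theorem~\ref{thm:upperHM} with $c=\log n$. Since the theorem is stated for deterministic protocols, the small extra step is to handle the shared randomness: write the overall $\HM_{nl}$ success probability as $\E_R[p_r]$, where $p_r$ is the success probability of the deterministic restriction obtained by fixing $R=r$. Each such restriction yields a deterministic $\log n$-bit communication protocol, hence $p_r \le \tfrac{1}{2} + O(\log n/\sqrt n)$ by Theorem~\ref{thm:upperHM}, and averaging over $r$ preserves the bound. Combined with Theorem~\ref{thm:quantumHMnl}, this yields $\omega_{q,n}(\HM_{nl})/\omega_c(\HM_{nl}) = \Omega(\sqrt n/\log n)$.

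There is no real obstacle: the entire content of the argument is the observation that $a$ is only $\log n$ bits and that the predicate $(a\oplus b)\cdot(i\oplus j)$ can be evaluated by Bob once he knows $a$, $b$ and $(i,j)$. The non-trivial work has already been done in Theorem~\ref{thm:upperHM}; this is just a one-line simulation argument followed by a clean invocation of the communication lower bound.
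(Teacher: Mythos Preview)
Your proposal is correct and matches the paper's own proof essentially line for line: both convert a classical $\HM_{nl}$ strategy into a $\log n$-bit one-way protocol for $\HM$ by having Alice send $a$ and Bob output $(i,j)$ together with $v=(a\oplus b)(i\oplus j)$, then invoke Theorem~\ref{thm:upperHM} with $c=\log n$. Your explicit averaging over the shared randomness $R$ is a small but welcome clarification, since Theorem~\ref{thm:upperHM} is stated for deterministic protocols.
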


\begin{proof}
A protocol that wins $\HM_{nl}$ with success probability $1/2+\eps$ can be turned into a protocol for $\HM$ with $\log{n}$ bits of communication and the same probability to win:
the players play $\HM_{nl}$, with Alice producing $a$ and Bob producing $i,j,b$;
Alice then sends $a$ to Bob, who outputs $i,j,(a\oplus b)(i\oplus j)$. 
This requires $c=\log{n}$ bits of communication (the length of $a$), so Theorem~\ref{thm:upperHM} gives the bound.
\end{proof}

\subsection{Classical protocol for $\HM_{nl}$}

Here we show that our upper bound of $\frac{1}{2}+O\left(\frac{\log n}{\sqrt{n}}\right)$ on the best success probability
of classical strategies for $\HM_{nl}$ is nearly optimal:

\begin{theorem}\label{thm:lowerHMnl}
There exists a classical protocol for $\HM_{nl}$ such that
\[
\Pr_{\mathcal{U}} [ (a \oplus b) (i \oplus j) = x_i \oplus x_j ] = \frac{1}{2} + \Omega\left(\frac{1}{\sqrt{n}}\right).
\]
\end{theorem}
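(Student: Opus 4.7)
The plan is to port the $c=1$ classical communication protocol for $\HM$ (Theorem~\ref{thm:lowerHM}) into the non-local setting. The key observation is that Alice's output $a\in\01^{\log n}$ enters the verifier's check only through the single bit $a\cdot(i\oplus j)$ once $b$ and $(i,j)$ are fixed. So if Alice encodes a one-bit ``message'' $\mu(x)$ as $a=\mu\cdot e$ for a fixed basis vector $e\in\01^{\log n}$, and Bob ensures that his chosen edge satisfies $(i\oplus j)\cdot e=1$ while setting $b=0^{\log n}$, the verifier recovers exactly $\mu$; this effectively simulates 1 bit of communication.

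Concretely, I would let $e$ be the basis vector with a $1$ in the least-significant-bit position, and partition the indices in $[\sqrt n]$ according to their least significant bit into two blocks $B_1,B_2$ of size $\sqrt n/2$ each. Then $i\in B_1,\,j\in B_2\Rightarrow(i\oplus j)\cdot e=1$. Alice computes $\mu=\Maj(x|_{B_1})\oplus\Maj(x|_{B_2})$ and outputs $a=\mu\cdot e$. On matching $M$, Bob checks the event $E$ that some edge of $M$ has one endpoint in $B_1$ and the other in $B_2$. If $E$ holds, he picks such an edge uniformly at random and outputs it together with $b=0^{\log n}$; otherwise he outputs a uniformly random edge of $M$ together with a uniformly random $b\in\01^{\log n}$ drawn from private randomness.

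The winning probability then splits into two cases. Conditional on $E$, by construction $(a\oplus b)\cdot(i\oplus j)=\mu\cdot(e\cdot(i\oplus j))=\mu$, so the players win iff $\mu=x_i\oplus x_j$; the independent-majority correlation argument from the proof of Theorem~\ref{thm:lowerHM} with block size $k=\sqrt n/2$ then gives $\tfrac12+\Omega(1/\sqrt n)$. Conditional on $\neg E$, Bob's uniformly random $b$ makes $b\cdot(i\oplus j)$ an unbiased coin flip independent of everything else (since $i\neq j$ forces $i\oplus j\neq 0^{\log n}$), so the verifier's bit is uniform and the winning probability is exactly $\tfrac12$. Combined with $\Pr[E]=\Omega(1)$ from the matching-survival argument in Theorem~\ref{thm:lowerHM}, the overall winning probability is $\tfrac12+\Omega(1/\sqrt n)$.

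I do not expect any substantial obstacle: the game is set up so that a 1-bit signal can be transmitted through the inner-product check, and the non-locality constraint costs nothing because Alice's encoding $a=\mu e$ depends only on $x$. The only mild subtlety is handling the $\neg E$ case so that it contributes $\tfrac12$ rather than some potentially negative-advantage quantity, which the uniformly-random-$b$ trick resolves cleanly.
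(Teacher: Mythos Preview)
Your proof is correct and takes a genuinely different route from the paper's. The paper converts the $c$-bit one-way protocol of Theorem~\ref{thm:lowerHM} into a non-local strategy via a shared-randomness ``guess-the-message'' trick: Alice and Bob share a uniform $r\in\01^c$; Alice outputs $a=0^{\log n}$ if $r$ equals the message $m$ she would have sent, and a uniformly random $a$ otherwise; Bob acts as though $r$ were the received message, producing $i,j,v$ and any $b$ with $b\cdot(i\oplus j)=v$. When $r=m$ the protocol behaves exactly like the communication protocol; when $r\neq m$ the random $a$ makes the predicate a fair coin. This yields advantage $\Omega(c/(2^c\sqrt n))$, and setting $c=1$ gives the theorem.

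You instead exploit the bilinear structure of the predicate directly: since the verifier reads only the single bit $a\cdot(i\oplus j)$ from Alice's side, you set $a=\mu e$ deterministically and redesign the two blocks (via the least significant bit) so that any crossing edge Bob selects automatically has $(i\oplus j)\cdot e=1$, forcing the verifier to recover $\mu$ exactly. This eliminates the shared-randomness guessing step and the accompanying $2^{-c}$ loss, and makes transparent \emph{why} one bit can be passed through the game for free. The paper's reduction, on the other hand, is more modular---it black-boxes Theorem~\ref{thm:lowerHM} and would apply to any one-way protocol and any $c$---whereas your argument is tailored to the inner-product form of $\HM_{nl}$. For the stated bound both approaches deliver the same $\tfrac12+\Omega(1/\sqrt n)$.
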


\begin{proof}
For any positive integer $c\leq \sqrt{n}$, Alice and Bob can ``simulate'' the communication protocol of Theorem~\ref{thm:lowerHM} using shared randomness, as follows. They share a uniformly random string $r\in\01^c$. Alice knows which message $m$ she would have sent to Bob in the communication protocol. If $r=m$ (which happens with probability $1/2^c$) then Alice outputs $a=0^{\log n}$, otherwise she outputs a uniformly random $a\in\01^{\log n}$. Bob treats $r$ as the communication Alice would have sent him in the original protocol, and computes $i,j,v$ accordingly. He outputs $i,j$, and a string $b$ satisfying $b(i\oplus j)=v$. Note that if $m=r$ then $a=0^{\log n}$ and hence $(a\oplus b)(i\oplus j)=v$, in which case Alice and Bob win $\HM_{nl}$ with probability $\frac{1}{2} + \Omega(\frac{c}{\sqrt{n}})$. On the other hand, if $m\neq r$ then $a\oplus b$ is a uniformly random string and $i\oplus j\neq 0^{\log n}$, so $(a\oplus b)(i\oplus j)$ is a uniformly random bit. Hence the overall winning probability is:
\[
\Pr_{\mathcal{U}} [ (a \oplus b) (i \oplus j) = x_i \oplus x_j ] = \frac{1}{2^c} \cdot \left( \frac{1}{2} + \Omega\left(\frac{c}{\sqrt{n}}\right) \right) + \left(1-\frac{1}{2^c}\right) \cdot \frac{1}{2} = \frac{1}{2} + \Omega\left(\frac{c}{2^c\sqrt{n}}\right).                       
\]
Taking $c=1$ gives the claimed result.
\end{proof}

\subsection{An alternative classical protocol based on the Grothendieck inequality}

In this section we describe a classical protocol that works for arbitrary input distributions $\pi$,
instead of just uniform.  This protocol will be based on the famous Grothendieck inequality. 
Let $\HM_{nl}(\pi)$ denote the Non-Local Hidden Matching game with probability distribution $\pi$ on the input.

\begin{theorem}\label{thm:lowerHMnl(pi)}
For every input distribution $\pi$ there exists a classical protocol for $\HM_{nl}(\pi)$ such that
\[
\Pr_{\pi} [ (a \oplus b) (i \oplus j) = x_i \oplus x_j ] = \frac{1}{2} + \Omega\left(\frac{1}{\sqrt{n}\log n}\right).
\]
\end{theorem}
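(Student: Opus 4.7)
The plan is to use Grothendieck's inequality to extract a classical protocol for $\HM_{nl}(\pi)$ from the quantum protocol of Theorem~\ref{thm:quantumHMnl}, which achieves advantage~$1$ for every~$\pi$. First I would express the advantage of any classical strategy as a bilinear form. Writing $\chi_k(v):=(-1)^{v\cdot k}$, the identity $(-1)^{(a\oplus b)\cdot(i\oplus j)}=\chi_{i\oplus j}(a)\chi_{i\oplus j}(b)$ turns the advantage of a strategy described by Alice's distribution $p_A(a\mid x)$ and Bob's distribution $p_B((i,j),b\mid M)=q_M(i,j)\,r_M^{ij}(b)$ into
\[
\omega(\pi)=\sum_{x,M,(i,j)\in M}\pi(x,M)\,q_M(i,j)\,\alpha(x,i\oplus j)\,\beta(M,(i,j),i\oplus j)\,(-1)^{x_i+x_j},
\]
where $\alpha(x,k):=\sum_a p_A(a\mid x)\chi_k(a)\in[-1,1]$ and $\beta(M,(i,j),k):=\sum_b r_M^{ij}(b)\chi_k(b)\in[-1,1]$.

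Next I would observe that the quantum protocol of Theorem~\ref{thm:quantumHMnl} realizes exactly this bilinear form with $\alpha,\beta$ replaced by inner products of unit vectors in~$\mathbb{C}^n$, achieving value~$1$. Since the quantum protocol wins with certainty for every~$\pi$, the natural SDP relaxation of the bilinear form has value at least~$1$. Grothendieck's inequality then guarantees that replacing the unit vectors by $\pm 1$ signs loses at most a factor of~$K_G$: there exist sign assignments $\widetilde\alpha(x,k),\widetilde\beta(M,(i,j),k)\in\{\pm 1\}$ and an edge distribution $\widetilde q_M$ whose bilinear value is at least $1/K_G$.

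The remaining step is to round these signs into valid output distributions on $\{0,1\}^{\log n}$, since an arbitrary sign pattern $\{\widetilde\alpha(x,k)\}_k$ typically does not coincide with the character spectrum of a single~$a(x)$. I would use the soft rounding $p_A(a\mid x)\propto 1+\eta\sum_{k\neq 0}\widetilde\alpha(x,k)\chi_k(a)$, with a scaling~$\eta$ chosen small enough to guarantee nonnegativity, and an analogous rounding for Bob. A Khintchine-type bound on the Rademacher character sum $\max_a|\sum_k\widetilde\alpha(x,k)\chi_k(a)|$ permits $\eta=\Omega(1/\sqrt{n\log n})$; with this choice $\alpha(x,k)=\eta\widetilde\alpha(x,k)$ and $\beta(M,(i,j),k)=\eta\widetilde\beta(M,(i,j),k)$. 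Exploiting the diagonal sparsity of the bilinear form (only entries with $k=i\oplus j$ contribute), the final advantage is $\Omega(\eta/K_G)=\Omega(1/(\sqrt{n}\log n))$, as claimed.

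The main obstacle is this last rounding step: obtaining the sharp $\sqrt{n}\log n$ loss rather than the naive $n$ or $n\log n$. This requires combining (i) the diagonal structure of the bilinear form, which implies that only a single frequency per term is queried, so an extra $\sqrt{n}$ does not appear in the rounding; and (ii) a concentration estimate for Rademacher character sums, which is what lets $\eta$ scale like $1/\sqrt{n\log n}$ rather than $1/n$. Getting the interaction between Grothendieck's inequality and this $\ell_\infty$-to-probability rounding right is the key technical point.
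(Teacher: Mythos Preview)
Your high-level plan---extract a classical strategy from the quantum one via Grothendieck's inequality---is the same idea the paper uses, but the rounding step has a real gap. You invoke a Khintchine-type bound to control $\max_a\bigl|\sum_{k\neq 0}\widetilde\alpha(x,k)\chi_k(a)\bigr|$, but Khintchine applies to \emph{i.i.d.\ Rademacher} signs, whereas the $\widetilde\alpha(x,k)$ produced by Grothendieck's inequality are specific signs over which you have no distributional control. Nothing in Grothendieck (or its hyperplane-rounding proof) prevents, say, $\widetilde\alpha(x,k)=+1$ for every $k$ at some fixed $x$, giving a sum of $n-1$ at $a=0$ and forcing $\eta=O(1/n)$. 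The rows $(x,k)$ for varying $k$ all correspond to the same Alice-input $x$, so there is no a~priori reason their optimal signs should decorrelate like independent coins. (A side red flag: your claimed $\Omega(1/\sqrt{n\log n})$ would be strictly stronger than the theorem's $\Omega(1/(\sqrt{n}\log n))$.)

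The paper sidesteps the need to round many Fourier coefficients at once. Alice and Bob first share a uniformly random $i\in[n]$ and $r\in[\log n]$; Bob outputs the unique edge $(i,j)\in M$ containing $i$. This reduces the task to producing \emph{one bit} each, $A(x),B(y)\in\{0,1\}$ with $A(x)\oplus B(y)=x_i\oplus x_j$. Grothendieck is then applied to the matrix $N_{xy}=\pi(x,y)(-1)^{x_i\oplus x_j}$, with explicit unit vectors $v_x=\frac{1}{\sqrt n}\sum_k(-1)^{x_i\oplus x_k}\ket{k}$ and $v_y=\ket{j}$ witnessing SDP value $1/\sqrt n$; hence sign strategies exist with correlation at least $1/(K_G\sqrt n)$. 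Alice outputs $a=A(x)e_r$ and Bob outputs $b=B(y)e_r$, so that $(a\oplus b)\cdot(i\oplus j)=A(x)\oplus B(y)$ whenever the $r$-th bit of $i\oplus j$ is $1$, which happens with probability at least $1/\log n$. The $\sqrt n$ loss thus lives in the SDP value and the $\log n$ in this single-coordinate embedding---no multi-frequency rounding is needed.
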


\begin{proof}
Alice and Bob start with a shared uniformly random $i\in[n]$ and $r\in[\log n]$.
There will be a unique $j$ such that $(i,j)\in M$. Bob outputs that $(i,j)$.  
Now we need to explain how they compute $a,b\in\01^{\log n}$ 
such that $(a\oplus b)(i\oplus j)=x_i\oplus x_j$ with advantage $\Omega(1/\sqrt{n}\log n)$.

We start by defining the following systems of unit vectors $\{v_x\},\{v_y\}\subseteq\mathbb{R}^n$, 
and matrix $N\in\mathbb{R}^{n\times n}$.
Here $x$ and $y$ range over $\01^n$ and $\M_n$, respectively, while $i\in[n]$ is as above.
\begin{align*}
v_x & = \frac{1}{\sqrt n} \sum_{k\in \01^{\log n}} (-1)^{x_i\oplus x_k}\ket{k} \\
v_y & = \ket{j}, \mbox{ for } (i,j)\in y \\
N_{xy} & = \pi(x,y) \cdot (-1)^{x_i\oplus x_j}, \mbox{ for } (i,j)\in y.
\end{align*}
We have:
\[
 \sum_{x,y} N_{xy} \inp{v_x}{v_y} = 
 \sum_{x,y} N_{xy} \frac{1}{\sqrt n} \sum_k (-1)^{x_i\oplus x_k}\inp{k}{j} =
 \frac{1}{\sqrt n} \sum_{x,y} N_{xy} (-1)^{x_i\oplus x_j} =  
 \frac{1}{\sqrt n} \sum_{x,y}\pi(x,y) =
 \frac{1}{\sqrt n}.
\]
It follows from the Grothendieck inequality that there is constant%
\footnote{This constant (which is independent of the matrix $N$ and the vectors) is called the Grothendieck constant.
Its exact value is unknown, but~\cite{Krivine79, Davie84, Reed91} show that $1.68 \lesssim K_G \lesssim 1.78$.
The relationship between non-local games and the Grothendieck inequality has been studied extensively in~\cite{BrietBuhrmanToner09}.} 
$K_G$ such that the following holds:
there exist classical strategies $A:x\mapsto\01$ and $B:y\mapsto\01$ such that
\begin{equation}\label{eq:grothendieck}
\sum_{x,y} N_{x,y} (-1)^{A(x)\oplus B(y)} \geq \frac{1}{K_G} \max_{\{v_x\},\{v_y\}}\sum_{x,y} N_{xy} \inp{v_x}{v_y} \geq \frac{1}{K_G\sqrt{n}}.
\end{equation}
This implies that (for every $i$, and for $j$ defined by $(i,j)\in y$), 
we have $\Pr_\pi[A(x)\oplus B(y)=x_i\oplus x_j]\geq \frac{1}{2}+\frac{1}{2K_G\sqrt{n}}$.
It remains to define the output strings $a$ and $b$, which we do as follows:
\begin{itemize}
\item Since Alice knows $i$ and $x$, she knows the bit $A(x)$.
She outputs $a=A(x)e_r$, where $e_r\in \01^{\log n}$ is a string whose only 1-bit sits at position $r$.
\item Since Bob knows $i$ and $y$, he knows the bit $B(y)$. Since $r$ is uniformly random and $i\neq j$,
with probability at least $1/\log n$ the strings $i$ and $j$ differ at position $r$.
If this is the case then Bob outputs $b=B(x)e_r$. This ensures $(a\oplus b)(i\oplus j)=A(x)\oplus B(y)$, 
so then Alice and Bob win with probability at least $1/2+1/(2K_G\sqrt{n})$.
If $i$ and $j$ do not differ at position $r$, then Bob outputs a uniformly random $b\in\01^{\log n}$, 
and they win with probability 1/2. 
\end{itemize}
The overall winning probability is at least $1/2+1/(2K_G\sqrt{n}\log n)$.
\end{proof}

If $\pi$ induces a uniform marginal distribution on $M$, 
then $i$ and $j$ in the above proof differ at position $r$ with probability 1/2 instead of $1/\log n$.
This gives an alternative proof of Theorem~\ref{thm:lowerHMnl}.

\subsection{Quantum/classical ratio as a function of the number of possible outputs}
We can also study the ratio between the quantum and classical values as a function of the number of possible outputs,
rather than the local dimension of the entangled state. Let $\omega_{q}(G)=\sup_n\omega_{q,n}(G)$ be the quantum value 
with unlimited (but finite) entanglement. 
It is known that for a game $G$ with $k$ possible outputs for Alice (i.e., values for $a$) 
and $\ell$ possible outputs for Bob, we have $\omega_q(G)/\omega_c(G)=O(k\ell)$~\cite{dklr:nonsignal,PerezGarcia09}.

As presented above, our game $\HM_{nl}$ has $n$ possible outputs for Alice and roughly $n^3$ for Bob.  
However, it can easily be modified to have only $n$ possible outputs for Bob,
by restricting $\M_n$ to matchings $M=\{(i,j)\}$ where $i\leq n/2$ and $j>n/2$, 
and where there is a bijection between $(i,j)$ and $i\oplus j$.
Now Alice behaves as before, and we just require Bob to output $i\oplus j$ (which he can do in $\log n-1$ bits 
since the most significant bit is always 1), and the bit $w=b(i\oplus j)$. 
The number of possible outputs for Bob is now $2^{\log n-1}\cdot 2=n$,
and the original relation $(a\oplus b)(i\oplus j)=x_i\oplus x_j$ is equivalent to $a(i\oplus j)=x_i\oplus x_j\oplus w$.

For this modified version of $\HM_{nl}$, we also have $\omega_q(G)/\omega_c(G) = \Omega(\sqrt{n}/\log n)$.
First, it is easy to see that the proof of Theorem~\ref{thm:upperHM} works as before,
the only change being that now $q_a(i,j)\leq 2/n$ instead of $\leq 1/(n-1)$,
because we are choosing matchings from a subset of $\M_n$.
Second, the proof of Theorem~\ref{thm:quantumHM} also works when Bob's output is modified as above, 
so we have the claimed bound.

\section{Conclusion and future work}

We presented a simple non-local game where the ratio between the quantum value (with $n$-dimensional entanglement)
and the classical value scales as roughly $\sqrt{n}$.  On the other hand, Junge et al.~\cite{PerezGarcia09}
showed that this ratio is $O(n)$ for all possible games.
It is an interesting open problem to close this quadratic gap: can we find a non-local game where the quantum/classical ratio is $\Omega(n)$ or find a better upper bound on all games?
A second open problem is closing the gap on the ratio between the quantum and classical values as a function of the number of possible outputs.
We have still more than a fourth-power gap with our lower bound of $\Omega(\sqrt{n}/\log n)$.

\subsubsection*{Acknowledgements} 
We thank Jop Bri\"et for useful discussions in the early stages of this research.

%\bibliography{qc}
\bibliographystyle{alpha}
\newcommand{\etalchar}[1]{$^{#1}$}

\end{document}